\newtheorem*{theorem}{Theorem}
\newtheorem*{lemma}{Lemma}
\begin{document}

\title{Efficient detection of useful long-range entanglement in imperfect cluster states}
\author{Thomas Nutz}
\email{nutzat@gmail.com}
\affiliation{Controlled Quantum Dynamics Theory Group, Imperial College London, London SW7 2AZ, United Kingdom}
\author{Antony Milne}
\affiliation{Controlled Quantum Dynamics Theory Group, Imperial College London, London SW7 2AZ, United Kingdom}
\affiliation{Department of Computing, Goldsmiths, University of London, New Cross, London SE14 6NW, United Kingdom}
\author{Pete Shadbolt}
\author{Terry Rudolph}
\affiliation{Controlled Quantum Dynamics Theory Group, Imperial College London, London SW7 2AZ, United Kingdom}
\date{\today}

\begin{abstract}
Photonic cluster states are a crucial resource for optical quantum computing. Recently a quantum dot single photon source has been demonstrated to produce strings of photons in a linear cluster state, but high photon loss rates make it impossible to characterize the entanglement generated by conventional methods. We present a benchmarking method for such sources that can be used to demonstrate useful long-range entanglement with currently available collection/detection efficiencies below $1 \%$. Measurement of the polarization state of single photons in different bases can provide an estimate for the three-qubit correlation function $\braket{ZXZ}$. This value constrains correlations spanning more than three qubits, which in turn provide a lower bound for the localizable entanglement between any two qubits in the large state produced by the source. Finite localizable entanglement can be established by demonstrating $\braket{ZXZ}>\frac{2}{3}$. This result enables photonic experiments demonstrating computationally useful entanglement with currently available technology.
\end{abstract}

\maketitle

\section{Introduction}
Measurement-based quantum computation (MBQC) \cite{raussendorf_measurement-based_2003} has become a promising candidate for the most resource-efficient way to build a universal quantum computer. The greatest challenge of MBQC is the generation of a sufficiently large entangled resource state. This step is critical because only certain types of multi-qubit entanglement are known to enable universal quantum computation \cite{gross_measurement-based_2007}, most prominently cluster state entanglement \cite{briegel_persistent_2001,wei_two-dimensional_2012}.

Photonic systems have been proposed to generate cluster states \cite{nielsen_optical_2004,browne_resource-efficient_2005}. Recently a complete architecture for a linear optical quantum computer has been developed that relies on the probabilistic fusion of many small entangled states into one large cluster state \cite{gimeno-segovia_three-photon_2015}. In this proposal, the generation of the entangled resource state requires only the generation of many maximally entangled three-photon states, at the cost of a significant overhead of single-photon detection measurements and classical information processing. This overhead could be dramatically reduced by developing deterministic ``machine gun'' sources of photonic cluster states \cite{lindner_proposal_2009,economou_optically_2010}.

As experimental work \cite{schwartz_deterministic_2016} makes progress towards realizing the cluster state source proposed in \cite{lindner_proposal_2009}, the challenge of benchmarking such systems comes to the fore. The state produced will certainly be affected by errors and one needs to quantify to what extent the functionality of the state as a resource for MBQC is affected. Since any useful resource state spans many qubits, any benchmarking method with exponential scaling, such as standard quantum state tomography, is out of the question. In fact a benchmarking method for cluster state sources has been presented in \cite{schwarz_efficient_2012} which requires a number of single-photon measurements that scales only linearly in the size of the state being benchmarked. We briefly review this method before stating our result.

Localizable entanglement ($\mathrm{LE}^{i,j}(\rho)$) can be taken as the figure of merit of a cluster state approximation. It is defined with respect to two qubits $i$, $j$ in a state $\rho$ of $n$ qubits as the maximum entanglement between $i$ and $j$ that can be obtained by means of a sequence of single-qubit measurements on the $n-2$ other qubits in $\rho$ \cite{popp_localizable_2005}. The maximum is taken over all measurement bases and we take concurrence to be the entanglement measure. The task of benchmarking a cluster state source is accomplished by lower bounding the localizable entanglement of the state that it produces. 

Particular triplets of expectation values $\braket{B_i},\ i=1,2,3$ provide a lower bound to the localizable entanglement, as shown in Appendix A. We refer to this bound as the \textit{direct bound} in the following. The operators $B_i$ are tensor products of Pauli operators and their expectation values can be measured as correlations between outcomes of single-qubit measurements in the corresponding Pauli bases. \cite{schwarz_efficient_2012} proposes a simple experimental setup to measure these expectation values for photonic systems such as the one proposed in \cite{lindner_proposal_2009}. This benchmarking method, however, is limited by the fact that the time required to measure $\braket{B_i}$ increases exponentially with the number of photons in the state to be benchmarked and the inverse of the collection/detection efficiency. Hence only small states can be benchmarked using the direct bound in currently feasible experiments.

The method presented here overcomes this limitation. We find that if the expectation values $\braket{Z_{i-1}X_iZ_{i+1}}$ on any three neighboring qubits in a linear state $\rho$ are no smaller than some value $\braket{ZXZ}$, then
\begin{equation}
\mathrm{LE}^{j,j+k}(\rho) \geq \max \{ 0, 1-(k+1)(1-\braket{ZXZ}) \}.
\label{eq: intro result}
\end{equation}
Hence measurement of $\braket{ZXZ}$ alone can suffice to establish localizable entanglement across many qubits, as illustrated in Fig. \ref{fig: intro}.

Furthermore, the expectation values $\braket{Z_{i-1}X_iZ_{i+1}}$ give information on the usefulness of a state for practical quantum information processing tasks. For instance we find that a linear state $\rho$ with expectation values $\braket{Z_{i-1}X_iZ_{i+1}} \geq \braket{ZXZ}$ enables a quantum teleportation channel across $n-2$ qubits of fidelity
\begin{equation}
F_T \geq 1 - \frac{n}{3}(1-\braket{ZXZ}).
\label{eq: fid bound}
\end{equation}
In the following we presents proofs of the two results \eqref{eq: intro result} and \eqref{eq: fid bound} and describe their applicability.

\begin{figure}
\includegraphics[width=\linewidth]{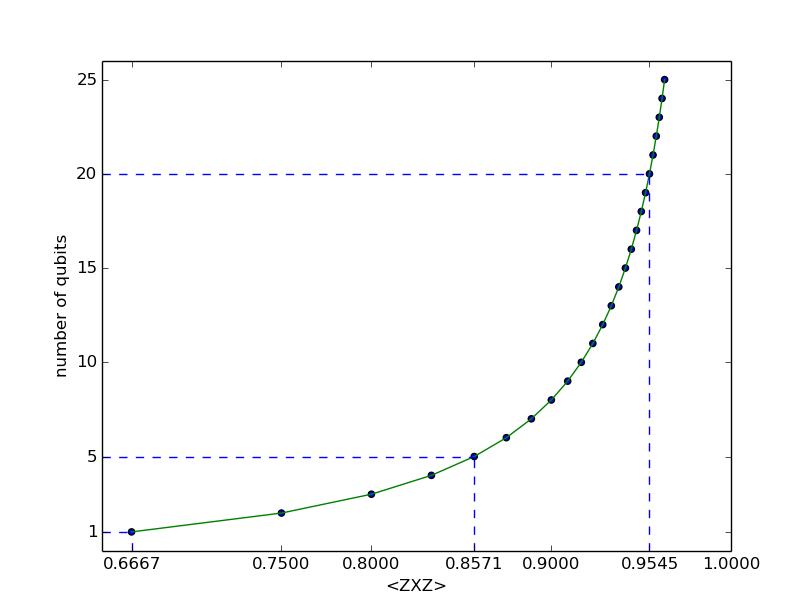}
\caption{A sufficiently high value of $\braket{ZXZ}$ guarantees non-zero LE across many qubits. For instance a value of $\braket{ZXZ}=0.8571$ ($0.9545$) is sufficient to demonstrate non-zero LE across $5$ ($20$) qubits. The curve shown is obtained by setting \eqref{eq: intro result} to zero and plotting $k-1$ (number of measured qubits) vs. $\braket{ZXZ}$.}
\label{fig: intro}
\end{figure}

\section{The cluster backbone and translational invariance}
Cluster states can be elegantly defined in the stabilizer formalism \cite{nielsen_quantum_2010}. The linear cluster state on $n$-qubits $\ket{\Psi_{CS}}$, for instance, is the unique eigenstate with eigenvalue $+1$ of the $n-2$ Pauli operators $K_i=\braket{Z_{i-1}X_iZ_{i+1}}$, where the subscripts label any three adjacent qubits, as well as the two operators $K_1 = \braket{X_1 Z_2}$ and $K_n = \braket{Z_{n-1}X_n}$ at the ends. This collection of $n$ Pauli operators  is called the \textit{stabilizer generator}, and the group of all $2^n$ products of any number of operators in the stabilizer generator is the \textit{stabilizer group} of the linear cluster state. Clearly the cluster state is the $+1$ eigenstate of all operators in the stabilizer group, which we call \textit{stabilizers}. In an imperfect cluster state the expectation values of stabilizers are generally smaller than unity and there may be nonzero expectation values of non-stabilizer operators. We call the collection of expectation values of stabilizers the \textit{cluster state backbone} of a state, because these values determine how similar a state is to the ideal cluster state.

We can furthermore classify stabilizers $K$ according to the number of generators $K_i$ that need to be multiplied to obtain $K$. We denote a stabilizer operator that is the product of $m$ generators as $K^m$. Clearly there are $\binom{n}{m}$ stabilizers $K^m$ where $1\leq m \leq n$.

In the following we make the assumption of \textit{weak translational invariance} (TI) of the state to be characterized. In its strongest form TI requires the reduced density matrix of any number of adjacent qubits to be the same, i.e. independent of where in the $n$ qubit state the segment is found. We only require the weaker assumption that $\braket{K_i}=\braket{K_j}$ $ \forall \ i,j$ and denote $\braket{K_i}=\braket{ZXZ}$ in the following. The assumption of translational invariance is reasonable because the large environments in semiconductor quantum light sources can be assumed to induce mainly Markovian errors, which give rise to TI states. Even the non-Markovian errors modelled in \cite{mccutcheon_error_2014} satisfy TI. 

\section{How does $\braket{ZXZ}$ constrain the cluster backbone?}
While generator expectation values $\braket{ZXZ}=1$ constrain the entire cluster backbone to $\braket{K^m}=1$, the values $\braket{K^m}$ are restricted to finite intervals when $\braket{ZXZ}<1$. To determine a lower bound on these intervals we consider two commuting multi-qubit Pauli operators $P_1$ and $P_2$. The quantity
\begin{equation}\begin{split}
\mathrm{Tr}[\frac{1}{2}(\mathbb{1} - P_1)\frac{1}{2}(\mathbb{1} - P_2)\ \rho\ ] & \\
= \mathrm{Tr}[\frac{1}{4}(\mathbb{1}-P_1-P_2 & +P_1P_2)\ \rho\ ]
\label{Eq: joint prob}
\end{split}\end{equation}
gives the joint probability of measuring both $P_1$ and $P_2$ with outcome $-1$ in state $\rho$ and is therefore non-negative. An alternative argument for its non-negativity is that the product of two commuting positive-semidefinite matrices is positive-semidefinite itself.

Setting the r.h.s. of \eqref{Eq: joint prob} non-negative yields a bound that is fundamental to this work:
\begin{equation}
\braket{P_1P_2}\geq \braket{P_1} + \braket{P_2} -1.
\label{eq: ineq}
\end{equation}

For the case of the cluster backbone for which only the expectation values of the generators $\braket{K_i} = \braket{ZXZ}$ are known, it follows that 
\begin{equation}
\braket{K^m}\geq m(\braket{ZXZ}-1)+1.
\label{eq: n}
\end{equation}

\section{The WC state}

Imagine a device that is capable of emitting a long string of single photons in a TI state $\rho$ with long-range entanglement potentially similar to that of the cluster state. All that can be measured on $\rho$ is the value of $\braket{ZXZ}$. What is the worst-case scenario in terms of long-range entanglement of the state produced, i.e. what is the minimal $\mathrm{LE}^{j,j+k}(\rho)$ consistent with the known $\braket{ZXZ}$?

Note that when we consider the segment of the state $\rho$ spanning qubits $j$ to $j+k$, we mean the state on qubits $j$ to $j+k$ that remains after qubits $j-1$ and $j+k+1$ have been measured in the $Z$ basis with outcomes $+1$ (see Fig. \ref{fig: fig1}). Thus the two expectation values $\braket{Z_{j-1}X_jZ_{j+1}}$ and $\braket{Z_{j+k-1}X_{j+k}Z_{j+k+1}}$ become $\braket{X_jZ_{j+1}}$ and $\braket{Z_{j+k-1}X_{j+k}}$, respectively, which constitute the boundary stabilizers of a cluster state. This construction is necessary because a segment of a perfect cluster state only becomes a cluster state itself after the boundary qubits are ``clipped off'' by $Z$ measurements. For ease of notation, we denote the state of such a $n$-qubit segment by $\rho_n$ in the following and label the qubits from $1$ to $n$.

Eq. \eqref{eq: n} yields a lower bound to any unknown stabilizer expectation value $\braket{K^m}$, and certain triplets $\braket{B_i}$, $i=1,2,3$,  of these expectation values provide a direct bound to $\mathrm{LE}^{1,n}(\rho_n)$. One might now wonder what highest direct bound can be inferred in this way and whether a physical state $\rho_n$ can saturate it. Both these questions are answered by considering a state which we call the \textit{worst-case state} (WC state) on $n$ qubits and define as
\begin{equation}
\rho_n^{WC} = \lambda \ket{C_n}\bra{C_n} + \frac{1-\lambda}{n}\sum_{i=1}^n Z_i \ket{C_n}\bra{C_n} Z_i ,
\label{eq: worst}
\end{equation}
where $\ket{C_n}$ is the linear cluster state on $n$ qubits. The parameter $\lambda$ is chosen such that $\rho_n^{WC}$ is consistent with a given value of $\braket{ZXZ}$, i.e.
\begin{equation}
\lambda = 1 - n \frac{1-\braket{ZXZ}}{2}.
\label{eq: lambda}
\end{equation}

The defining property of the WC state is that it saturates all inequalities given by \eqref{eq: n}:
\begin{equation}\begin{aligned}
\mathrm{Tr}[K^m \rho_n^{WC}] &=  \lambda + \frac{1-\lambda}{n}\sum_{i=1}^n \mathrm{Tr}[K^m Z_i \ket{C_n}\bra{C_n} Z_i] \\
&= \lambda + \frac{1-\lambda}{n}(n-2m) \\
&= m(\braket{ZXZ}-1)+1.
\end{aligned} \label{Eq: saturate}
\end{equation}
The second line is obtained by noting that $K^m$ anticommutes with $m$ of the $Z_i$ operators, yielding a $-1$ for $m$ terms in the sum on the r.h.s. of the first line. The third line follows by substituting $\lambda$ from \eqref{eq: lambda}.

Since $\rho_n^{WC}$ is obviously a physical state and saturates all the inequalities of \eqref{eq: n}, these constraints are in fact a tight bound on the operators in the cluster backbone. Furthermore, $\rho_n^{WC}$ is the state with the lowest possible direct bounds on LE consistent with the given $\braket{ZXZ}$, because any other state cannot have expectation values in the cluster backbone smaller than $\mathrm{Tr}[K^m\rho_n^{WC}]$. Certain triplets of these expectation values, however, guarantee a certain value of LE, as shown in Appendix A. Hence no other state can have LE lower than the direct bound that would be measured in $\rho_n^{WC}$, the value of which is derived in the next section. 

\section{The $\braket{ZXZ}$ bound}

A sequence of $X$ or $Y$ measurements on qubits $2,3,...,n-1$ performed on a linear cluster state results in a maximally entangled state on qubits $1$ and $n$. This ideal resultant two-qubit state can be written as 
\begin{equation}
\tau_{1,n} = \frac{1}{4}(\mathbb{1}\otimes\mathbb{1} + \sum_{i=1}^{3}t_i \ \sigma_i^1 \otimes \sigma_i^n ).
\label{eq: t-state}
\end{equation}
For example, a sequence of $Y$ measurements on an ideal linear cluster state with $+1$ outcomes leads to $\sigma_i^1 = \{Y,Z,X\}$, $\sigma_i^n = \{Z,Y,X\}$, and $\vec{t} = (1,1,1)$ \cite{schwarz_efficient_2012}. We can can always take ${t_i} \geq 0$ by absorbing a minus sign in  $\sigma_i^1$ or $\sigma_i^1$. For states of this form the concurrence is given by \cite{Verstraete2001}
\begin{equation}
C(\tau)=\max\{0,\frac{1}{2}(t_1+t_2 + t_3-1)\}.
\label{eq: ant}
\end{equation}
Given a multi-qubit state $\rho$, the coefficients $t_i$ can be found as the expectation values of the three cluster state stabilizers that commute with every single-qubit measurement (see Fig. \ref{fig: fig1} for an example). In a faulty cluster state $t_i$ will take absolute values smaller than one, and measurement of these three expectation values yields the direct bound (see Appendix A) \cite{schwarz_efficient_2012}. However, as mentioned in the introduction, the operators with expectation values $t_i$ have large support for long strings of photons produced and are therefore hard to measure.

In $\rho_n^{WC}$ every expectation value takes the minimum value allowed by a given $\braket{ZXZ}$, and therefore $C(\tau)$ is minimized for every sequence of measurements and outcomes on qubits $2,3,\cdots,n-1$. In the following we derive the expression for this bound on LE of $\rho_n^{WC}$.

Each $t_i$ is given by an expectation value $\braket{K^{m_i}}$ in $\rho^{WC}_n$ and by \eqref{eq: n} can be written as
\begin{equation}
t_i=m_i(\braket{ZXZ}-1)+1,
\label{eq: t_i}
\end{equation}
where $m_i$ gives the number of stabilizer generators contained in the operator $K^{m_i}$. Using \eqref{eq: ant} we obtain 
\begin{equation}
C(\tau)=\max\{0,\frac{1}{2}(m_1+m_2+m_3)(\braket{ZXZ}-1)\}.
\label{eq: intermediateC}
\end{equation}

\begin{figure}
\includegraphics[width=\linewidth]{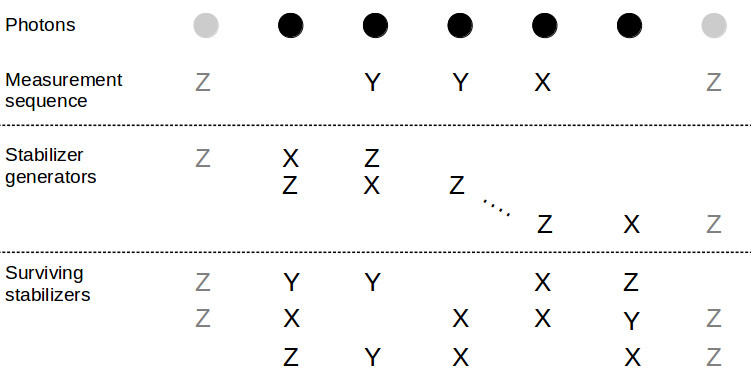}
\caption{Example of a stabilizer description of a measurement sequence that localizes entanglement in a cluster state. The first and last photons (grey dots) are measured out first, such that the state segment has boundary stabilizers $XZ$ and $XZ$. The state $\rho_n$ describes the $n=5$ qubits represented by black dots. Out of the entire stabilizer group only the three operators at the bottom commute with all three measurement projectors and therefore remain in the stabilizer group of the resultant state. It is easy to verify that the surviving stabilizers are a product of three, four, and three generators, respectively (top to bottom), such that $m_i$ take the values ${3,4,3}$.}
\label{fig: fig1}
\end{figure}

The number $m_1+m_2+m_3$ depends only on the number of single-qubit measurements in the $X$ or $Y$ basis and not on the sequence itself. For instance, a single $X$ measurement leads to operators containing one, two, and three $ZXZ$s, while a $Y$ measurement will give three operators each containing two generators. Every further $X$ or $Y$ measurement appends another generator $ZXZ$ to two of the three stabilizers that commute with the measurement sequence, so that
\begin{equation}
m_1+m_2+m_3=4+2m=4+2(n-2),
\end{equation}
where $m$ is the number of $X$ or $Y$ measurements and $n$ the number of qubits in $\rho_n^{WC}$. Hence the LE across $n-2$ qubits in a state $\rho^{WC}_n(\braket{ZXZ})$ is lower bounded by
\begin{equation}
C(\tau_{1,n})=\max \{ 0, 1-n(1-\braket{ZXZ}) \}.
\label{eq: C}
\end{equation}
We refer to this value of concurrence as the $\braket{ZXZ}$ bound in the following. \eqref{eq: intro result} follows by considering a segment $\rho_n$ of $\rho$ as defined above and relabeling qubit $1$, $n$ as qubit $j$, $j+k$, respectively.

\subsection*{Localizable teleportation channel fidelity}
While localizable entanglement certainly is necessary for MBQC, a state of a given LE does not necessarily enable any quantum computation. This discrepancy between entanglement and actual usefulness of a state for quantum information processing tasks has led to the introduction of the fully entangled fraction $F$ as practical measure of quantum information processing significance \cite{grondalski_fully_2002}. The fully entangled fraction (FEF) is defined as the maximum fidelity of a given state $\rho$ with a maximally entangled state $\ket{\phi^+}$ that can be achieved by a local unitary:
\begin{equation}
F(\rho) \coloneqq \max_U \bra{\phi^+} (\mathbb{1} \otimes U) \rho (\mathbb{1} \otimes U^{\dagger}) \ket{\phi^+}.
\label{eq: FEF}
\end{equation}
A two-qubit state $\rho_T$ shared between Alice and Bob enables Alice to teleport a quantum state to Bob with a fidelity of
\begin{equation}
F_T = \frac{1}{3}(1+2F(\rho_T))
\label{eq: FEF teleportation}
\end{equation}
as shown in \cite{Horodecki_1999}. Remarkably, entanglement as measured by concurrence is a necessary but not sufficient condition for a nonzero FEF. There is, for instance, a state of both concurrence and FEF equal to $0.5$. Many copies of this state can be used to distill a number of Bell pairs, yet a single copy offers no advantage over any separable pure state for quantum teleportation ($F_{\ket{\psi}_{sep}}=0.5$) \cite{Horodecki_1999}.
The triplet sum $T \coloneqq t_1 + t_2 + t_3$ not only lower bounds the concurrence of the state $\rho_{t_i}$ with the expectation values $t_i$, but also its FEF:
\begin{equation}
F(\rho_{t_i}) \geq \frac{1}{4}(1+T).
\label{eq: triplet bounds FEF}
\end{equation}

Hence the state $\rho_n^{WC}$ with expectation values $\braket{ZXZ}$ enables a quantum teleportation channel between the first and nth qubit of fidelity
\begin{equation}
F_T \geq 1 - \frac{n}{3}(1-\braket{ZXZ}).
\label{eq: FEF bound}
\end{equation}
Clearly there is no state that is consistent with $\braket{K_i}=\braket{ZXZ}$ of lower triplet sum $T$ than $\rho_n^{WC}$. Hence \eqref{eq: FEF bound} is a tight lower bound on the teleportation fidelity. This statement is stronger than the $\braket{ZXZ}$ bound, because it guarantees that the localizable entanglement present can be used for a single-qubit channel.

\subsection*{Is the $\braket{ZXZ}$ bound tight?}
The $\braket{ZXZ}$ bound rules out the existence of TI states consistent with $\braket{ZXZ}$ of lower LE than the value given in \eqref{eq: C}. The question remains whether this bound is tight, i.e. whether there exists a state which has LE as low as its $\braket{ZXZ}$ bound.   
We believe that the $\braket{ZXX}$ bound is tight because we conjecture that
\begin{equation}
\mathrm{LE}^{1,n}(\rho_n^{WC}(\braket{ZXZ}))=\max \{ 0, 1-n(1-\braket{ZXZ}) \}.
\label{eq: optimality}
\end{equation} 
Appendix B provides strong evidence in support of this conjecture.

\section{Comparison to other characterization methods}
\subsection{The direct bound}
Measurement of $\braket{ZXZ}$ can be accomplished with a simpler experimental setup and fewer photon counts that the direct bound. However the value of LE demonstrated by measuring the direct bound on a source is higher than the LE guaranteed by the $\braket{ZXZ}$ bound for this source, except for the unrealistic scenario where the source produces a WC state. Hence there is a trade-off between saving measurement resources and demonstrating a high value of LE.

An interesting scenario to investigate this trade-off is a cluster state source as proposed in \citep{lindner_proposal_2009} subject to uncorrelated Pauli $Y$-errors on the emitter spin between single-photon emissions. These errors are likely to be the dominant error mechanism in quantum dot implementations of the proposal. Both the direct and the $\braket{ZXZ}$ bound can provide two different figures of merit for such a source. Firstly the number of qubits across which the LE is non-zero can be considered; secondly the value of the LE across a fixed number of qubits constitutes a figure of interest. As shown in Fig. \ref{fig: comparison}, the $\braket{ZXZ}$ bound is considerably below the direct bound for both quantities. The $\braket{ZXZ}$ bound approaches the direct bound as $\braket{ZXZ}\rightarrow 1$ (see inset in Fig. \ref{fig: comparison}).

The $\braket{ZXZ}$ bound is therefore particularly useful for sources with low emission/detection efficiencies and little decoherence of the emitter spin. In that case no direct bound is available but $\braket{ZXZ}$ is close to unity and hence high values of LE across a fixed number of qubits as well as non-zero LE across many qubits can be demonstrated. 

\begin{figure}
\includegraphics[width=\linewidth]{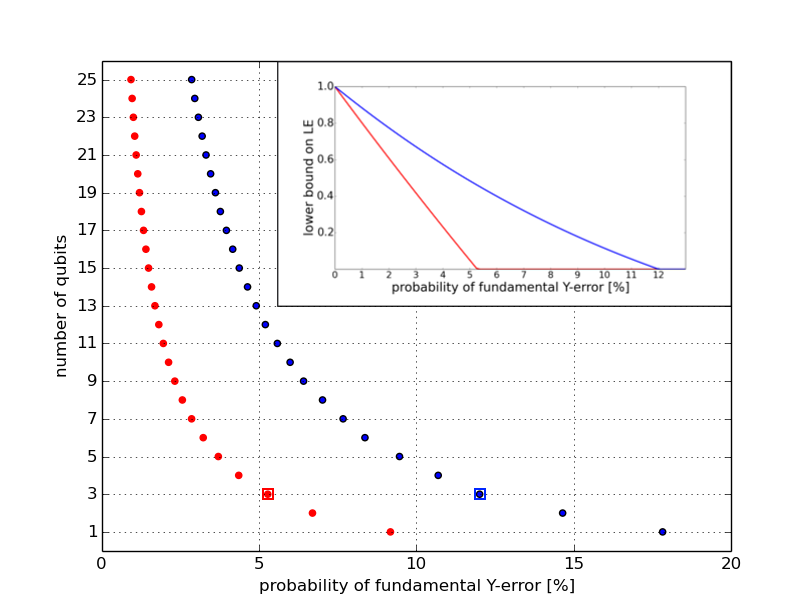}
\caption{The $\braket{ZXZ}$ bound is generally lower than the direct bound. In the main plot the maximum number of qubits across which the direct bound (blue) and the $\braket{ZXZ}$ bound (red) guarantee non-zero LE is shown. The x-axis gives the probability of a Pauli $Y$-error on the emitter spin (fundamental error) before each single-photon emission. The inset shows the value of LE that each bound establishes across three qubits, again plotted against the probability of a fundamental $Y$-error. The highlighted data points in the main plot correspond to the points in the inset plot where each curve hits zero.}
\label{fig: comparison}
\end{figure}

\subsection{Efficient tomography}
The idea of characterizing a cluster state approximation by its stabilizer generator expectation values $\braket{K_i}$ is not new. In fact a bound on the fidelity of a $n$-qubit state $\rho_n$ where $\mathrm{Tr}[K_i\rho_n]=\braket{K_i}$ with the $n$-qubit cluster state $\ket{C_n}$ has been derived in \cite{cramer_efficient_2010}:
\begin{equation}
\braket{C_n|\rho_n|C_n} \geq 1 - \frac{1}{2}\sum_{i=1}^n(1-\braket{K_i}).
\label{eq: fidelity1}
\end{equation}
For the case of $\rho_n$ satisfying weak TI this bound becomes
\begin{equation}
\braket{C_n|\rho_n|C_n} \geq 1 - \frac{n}{2}(1-\braket{ZXZ}).
\label{eq: fidelity}
\end{equation}
This fidelity bound, however, does not directly give information about the long-range entanglement in a state with $\braket{ZXZ}$ approaching one. Clearly two states with the same fidelity with a third, entangled state can themselves have very different entanglement.

Interestingly, the worst-case state $\rho_n^{WC}$ saturates the fidelity bound \eqref{eq: fidelity} and thereby sheds new light onto this result, which was obtained by a seemingly unrelated method. The fidelity of a state $\rho_n$ with $\ket{C_n}$ is given by 
\begin{equation}
\braket{C_n|\rho_n|C_n} = \frac{1}{2^n}\sum_{i=0}^{2^n-1} (-1)^{k_i}\mathrm{Tr}[B_i \rho_n],
\label{eq: Bbone}
\end{equation}
where $B_i$ are the operators in the cluster backbone, $k_i=\frac{1}{2}(1+\mathrm{Tr}[B_i \ket{C_n}\bra{C_n}])$, and $B_0=\mathbb{1}$. Using \eqref{Eq: saturate} for $\mathrm{Tr}[K^m\rho_n^{WC}]$ we find that $\rho_n^{WC}$ saturates the fidelity bound of \eqref{eq: fidelity}:
\begin{equation}\begin{aligned}
\braket{C_n|\rho_n^{WC}|C_n}&=\frac{1}{2^n}\sum_{m=0}^n \binom{n}{m}[m(\braket{ZXZ}-1)+1]\\
&=1 - \frac{n}{2}(1-\braket{ZXZ}),
\end{aligned}
\label{eq: binom}
\end{equation}
where we have used the binomial theorem. 

The WC state is therefore not only the worst-case scenario for LE, but also the worst-case scenario for fidelity given a value of $\braket{ZXZ}$. Moreover since all expectation values in $\rho_n^{WC}$ take on the minimal absolute values allowed by $\braket{ZXZ}$, every state that saturates \eqref{eq: fidelity} must have the same cluster backbone as $\rho_n^{WC}$.

\section{Conclusion}
Our result for the worst-case localizable entanglement in a cluster state approximation substantiates the idea of \cite{cramer_efficient_2010} that an expectation value $\braket{ZXZ}$ provides a meaningful benchmark for an experimentally produced cluster state. We have shown that as $\braket{ZXZ}$ approaches unity, useful entanglement increases at a rate lower-bounded by \eqref{eq: C}.

Measurement of an expectation value $\braket{ZXZ}$ can be accomplished with the simple setup presented in \cite{schwarz_efficient_2012} and emission/detection efficiencies that seem within reach for the experimental implementations of \cite{lindner_proposal_2009}, such as the originally proposed quantum dot system or charged NV-centres \cite{cramer_efficient_2010}. The threshold for establishing a cluster state source capable of producing long-range entanglement is therefore lowered to demonstrating $\braket{ZXZ}=\frac{2}{3}$, which would demonstrate non-zero localizable entanglement across a single qubit.

We gratefully acknowledge fruitful discussions with David Jennings and Ilai Schwarz and the support of the United States Army Research Office and the EPSRC.
\bibliography{Refs}

\appendix
\section{Three expectation values can lower bound the LE}

In \cite{schwarz_efficient_2012} a bound on the LE was derived given three expectation values and assuming zero entanglement fluctuations \cite{popp_localizable_2005} (``outcome-independent entanglement''). In the following we show that this bound holds even in the presence of entanglement fluctuations. We only deal with qubits and projective measurements here.

\begin{theorem}
The expectation values of three multi-qubit Pauli operators $B_{1}$, $B_{2}$, $B_{3}$ lower bound the LE as measured by a convex entanglement measure $E$ between two target qubits $t_1$, $t_2$ in any state $\rho$ of $n$ qubits as
\begin{equation}
\mathrm{LE}^{t_1,t_2}(\rho)\geq E(\rho_B).
\label{Eq: direct bound}
\end{equation}
Here $\rho_B=\frac{1}{4}(\mathbb{1}\otimes \mathbb{1} + \braket{B_1} Z\otimes Y + \braket{B_2} Y \otimes Z + \braket{B_3} X\otimes X)$ and $\{B_i \}$ need to satisfy the following:
\begin{itemize}
\item There exists a collection of $n-2$ single-qubit Pauli operators $\{ P_k \}$, each acting on one of the $n$ qubits except the target qubits $t_1$, $t_2$, such that $[B_l,P_k]=0$ $\forall$ $l,k$. The label $k$ runs from $0$ to $n-2$, while the target qubits are labelled $t_1$, $t_2$.
\item $B_3=B_2B_1$.
\item All three operators $B_i$ have nontrivial support on both qubits $t_1$ and $t_2$, i.e the component of of $B_i$ acting on the $t_1$, $t_2$ subspace is some two-qubit Pauli operator $B_i^t$.
\end{itemize}
\end{theorem}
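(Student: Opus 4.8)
The plan is to produce an explicit measurement strategy on the $n-2$ non-target qubits and show that the outcome-averaged entanglement it yields is at least $E(\rho_B)$. The strategy is the natural one: on each of the $n-2$ qubits other than $t_1,t_2$, measure the single-qubit Pauli $P_k$. Write an outcome string as $\vec{s}$, occurring with probability $p_{\vec{s}}$ and leaving the target pair in the two-qubit state $\rho^{\vec{s}}$. Since localizable entanglement is defined as the maximum over single-qubit-measurement strategies of the outcome-averaged entanglement (and $E$ is, in particular, invariant under local unitaries applied afterwards), we get $\mathrm{LE}^{t_1,t_2}(\rho)\ge\sum_{\vec{s}}p_{\vec{s}}\,E(\rho^{\vec{s}})$ for free, so it suffices to lower bound this sum by $E(\rho_B)$.

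The first real step is to relate the global expectation values $\braket{B_l}$ to the target correlators of the $\rho^{\vec{s}}$. Because $B_l$ commutes with every $P_k$ it preserves each joint $P_k$-eigenspace, and on the non-target qubits it is a tensor product of single-qubit Paulis, each of which (commuting with the local $P_k$) is $\mathbb{1}$ or $\pm P_k$. Hence, restricted to the eigenspace labelled by $\vec{s}$, the non-target part of $B_l$ acts as a scalar $\chi_l(\vec{s})\in\{\pm1\}$, giving $\braket{B_l}=\sum_{\vec{s}}p_{\vec{s}}\,\chi_l(\vec{s})\,\mathrm{Tr}[B_l^t\rho^{\vec{s}}]$. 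The key structural observation is that $B_3=B_2B_1$ forces $\chi_3(\vec{s})=\chi_1(\vec{s})\chi_2(\vec{s})$, i.e.\ the syndrome $(\chi_1,\chi_2,\chi_3)(\vec{s})$ always has an even number of $-1$'s. Now, the hypotheses (three mutually commuting two-qubit Paulis, each with nontrivial support on both targets, with $B_3^t\propto B_1^tB_2^t$) make $(B_1^t,B_2^t,B_3^t)$ local-unitary equivalent to $(Z\otimes Y,\,Y\otimes Z,\,X\otimes X)$ with $X\otimes X=(Z\otimes Y)(Y\otimes Z)$; conjugating all three by a single-qubit Pauli on $t_1$ or $t_2$ realizes exactly the even-weight sign patterns. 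Therefore for every $\vec{s}$ there is a local Pauli correction $U_{\vec{s}}$ with $U_{\vec{s}}^\dagger B_l^t U_{\vec{s}}=\chi_l(\vec{s})B_l^t$ — the syndrome is always correctable, in precise analogy with stabilizer error correction.

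Setting $\tilde\rho^{\vec{s}}=U_{\vec{s}}\rho^{\vec{s}}U_{\vec{s}}^\dagger$, we have $E(\rho^{\vec{s}})=E(\tilde\rho^{\vec{s}})$ and $\mathrm{Tr}[B_l^t\tilde\rho^{\vec{s}}]=\chi_l(\vec{s})\mathrm{Tr}[B_l^t\rho^{\vec{s}}]$, so the averaged state $\bar\rho=\sum_{\vec{s}}p_{\vec{s}}\tilde\rho^{\vec{s}}$ has $\mathrm{Tr}[B_l^t\bar\rho]=\braket{B_l}$ for $l=1,2,3$. Convexity of $E$ then gives $\sum_{\vec{s}}p_{\vec{s}}E(\rho^{\vec{s}})=\sum_{\vec{s}}p_{\vec{s}}E(\tilde\rho^{\vec{s}})\ge E(\bar\rho)$. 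Finally I would apply the local twirl $\mathcal{T}(\sigma)=\tfrac14\big(\sigma+(Z\otimes Y)\sigma(Z\otimes Y)+(Y\otimes Z)\sigma(Y\otimes Z)+(X\otimes X)\sigma(X\otimes X)\big)$: it is a uniform mixture of four local unitaries, hence an LOCC map and non-increasing on $E$; it fixes $\mathbb{1}\otimes\mathbb{1}$ and the three special Paulis and kills the remaining twelve two-qubit Paulis, since $\{\mathbb{1}\otimes\mathbb{1},Z\otimes Y,Y\otimes Z,X\otimes X\}$ is a maximal abelian subgroup of the two-qubit Pauli group equal to its own centralizer, so any Pauli outside it anticommutes with exactly two of the four twirl elements. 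Thus $\mathcal{T}(\bar\rho)=\rho_B$ and $E(\rho_B)=E(\mathcal{T}(\bar\rho))\le E(\bar\rho)$, which chained with the earlier inequalities yields $\mathrm{LE}^{t_1,t_2}(\rho)\ge E(\rho_B)$.

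I expect the main obstacle to be the sign bookkeeping in the middle step: establishing rigorously that the outcome-dependent signs $\chi_l(\vec{s})$ are jointly undone by a single local Pauli. This is exactly where $B_3=B_2B_1$ is indispensable — without the product relation the three correlators could realize an odd-weight sign pattern that no local unitary can correct, and $\bar\rho$ would then fail to reproduce the $\braket{B_l}$. The twirl step is conceptually clean, but I would want to verify carefully that $\mathcal{T}$ is genuinely a local (shared-randomness Pauli) channel so that monotonicity of the convex measure $E$ under LOCC legitimately applies, and that a convex $E$ indeed satisfies this monotonicity in the relevant form.
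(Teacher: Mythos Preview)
Your proof is correct and follows essentially the same route as the paper's: measure the $P_k$'s, apply outcome-dependent local Pauli corrections $U_{\vec s}$ so that the $B_l^t$ correlators of the averaged state equal $\braket{B_l}$, and invoke convexity together with local-unitary invariance of $E$. The only noteworthy difference is ordering: the paper first strips each post-measurement state to its three-correlator ``$t$-state'' (citing \cite{schwarz_efficient_2012} for ``additional terms cannot decrease entanglement'') and then averages, whereas you average first and perform the projection once at the end via the explicit local-Pauli twirl $\mathcal{T}$; both orderings are valid, and your self-contained twirl argument (convex mixture of local unitaries, hence $E$-non-increasing by convexity plus LU-invariance alone) is slightly cleaner than the paper's appeal to an external reference.
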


\begin{proof}
The LE between qubits $t_1$ and $t_2$ in a $n$-qubit state $\rho$ is defined as \cite{popp_localizable_2005}
\begin{equation}
\mathrm{LE}^{t_1,t_2}(\rho)\coloneqq \sup_{\{m\}}\sum_s p_s E(\rho^{t_1,t_2}_{m,s}).
\label{Eq: LE}
\end{equation}
Here $\{m\}$ denotes all possible sequences of local measurements on the $n-2$ qubits other than $t_1$, $t_2$, binary string $s$ gives the outcomes of such a measurement sequence, and $p_s$ the probability of this outcome. $\rho_{m,s}^{t_1,t_2}$ is the state of qubits $t_1$, $t_2$ after a particular measurement sequence $m$ with outcomes $s$.

The state of qubits $t_1$, $t_2$ obtained from $\rho$ by measurement sequence $\{ P_k \}$ with outcomes $s$ is written $\rho^{t_1,t_2}_{\{ P_k \},s}$. The average entanglement obtained by $\{ P_k \}$ is
\begin{equation}
\sum_s p_sE(\rho^{t_1,t_2}_{\{ P_k \},s})
\label{Eq: bound state}
\end{equation}
and clearly lower bounds $\mathrm{LE}^{t_1,t_2}(\rho)$. Furthermore
\begin{equation}
E(\rho^{i,j}_{\{ P_k \},s}) \geq E(\rho_s),
\label{Eq: t-state}
\end{equation}
where $\rho_s\coloneqq \frac{1}{4}(\mathbb{1} + t_1 B_1^t + t_2 B_2^t + t_3 B_3^t)$ (additional terms cannot decrease entanglement \cite{schwarz_efficient_2012}). The coefficients $t_i$ are the expectation values of the corresponding two-qubit Pauli operators $B_i$ in the state $\rho^{t_1,t_2}_{\{ P_k \},s}$.

We now associate strings of $n-2$ bits with operators $Q_q$ on the $n-2$ qubits other than $t_1$, $t_2$ where a $1$ $(0)$ at position $k$ in $q$ means that the component of $Q_q$ acting on qubit $k$ is $P_k$ $(\mathbb{1}_k)$. For instance we have $Q_{01101} = \mathbb{1}_1 \otimes P_2 \otimes P_3 \otimes \mathbb{1}_4\otimes P_5$.

With this notation the coefficients $t_i$ in $\rho_s$ are related to expectation values in the $n$ qubit state $\rho$ (write $\braket{\hat{O}}=\mathrm{Tr}[\hat{O}\rho]$) as
\begin{equation}
t_i= \sum_{q}\frac{(-1)^{s\cdot q}}{2^{n-2}p_s}\braket{B_i^t \otimes Q_q}.
\label{Eq: show}
\end{equation}
The sum runs over all $2^{n-2}$ bit strings $q$ of length $n-2$. $s\cdot q$ denotes the modular sum of all those measurement outcomes $s_i$ where $q_i=1$. One of the terms in the sum of \eqref{Eq: show} corresponds to the known expectation value $\braket{B_i}=\braket{B_i^t \otimes Q_{q_{B_i}}}$.

We can now exploit the convexity of our entanglement measure $E$ and invariance of entanglement under local unitary operations $U_s$ to write 
\begin{equation}\begin{aligned}
\mathrm{LE}^{t_1,t_2}(\rho)\geq \sum_{s}p_s E(\rho_s) &= \sum_s p_s E(U_s \rho_s U_s^\dagger )\\
& \geq E(\sum_s p_s U_s \rho_s U_s^\dagger ).
\end{aligned} \label{Eq: jetza}
\end{equation}
The local unitary $U_s$ can be chosen as a tensor product of two Pauli operators such that it anticommutes with two of the three $B_i^t$. For example if $B_1^t=Z\otimes Y$, $B_2^t=Y\otimes Z$, $B_3^t=X\otimes X$ we could choose $U_s=Z\otimes \mathbb{1}$ for a particular $s$ and have $U_s B_1 U_s^{\dagger} = B_1$, $U_s B_2 U_s^{\dagger} = -B_2$, $U_s B_3 U_s^{\dagger} = - B_3$. We then associate with every choice of $U_s$ a triplet of binary numbers $f_{1,2,3}(s)$ such that $f_i(s)=1$ if $\{ U_s,B_i^t \}=0$ and $f_i(s)=0$ if $[U_s,B_i^t]=0$. Note that $f_1(s)$ and $f_2(s)$ can be chosen freely, while $f_3(s)=f_1(s)\oplus f_2(s)$. The particular choice of $\{ U_s \}$ and associated $f_i(s)$ made here is 
\begin{equation}
f_i(s)=q_{B_i}\cdot s,
\label{Eq: trick}
\end{equation}
where $q_{B_i}$ is the binary string giving the sequence of $\mathbb{1}$ and $P_k$ operators in $B_i$.

Finally the expectation value of $B_i^t$ for the state \mbox{$\rho_{mix}\coloneqq \sum_s p_s U_s \rho_s U_s^\dagger$} can be written using \eqref{Eq: show} as
\begin{equation}
\mathrm{Tr}[B_i^t \rho_{mix}]=\sum_s p_s \sum_q \frac{(-1)^{q\cdot s \oplus f_i(s)}}{2^{n-2}p_s}\braket{B_i^t\otimes Q_q}.
\label{Eq: ExpVal}
\end{equation}
Substituting \eqref{Eq: trick} for $f_i(s)$ we find
\begin{equation}
\sum_{s}(-1)^{(q\oplus q_{B_i})\cdot s} = \begin{dcases*}
	2^{n-2} & when $q = q_{B_i}$, \\
	0 & otherwise.
\end{dcases*}
\label{Eq: binary sum1}
\end{equation}
Hence the functions $f(s,1)$ and $f(s,2)$ can be chosen such that $\rho_{mix}=\frac{1}{4}(\mathbb{1}\otimes \mathbb{1} + \braket{B_1} B_1 + \braket{B_2} B_2 + \braket{B_3} B_3)$. Without loss of generality any triplet of two-qubit Pauli operators satisfying the properties required in the theorem can be chosen as $B_1 = Z\otimes Y$, $B_2 = Y\otimes Z$ and $B_3 = X\otimes X$ such that
\begin{equation}
\mathrm{LE}^{t_1, t_2}(\rho)\geq E(\rho_B).
\label{Eq: finale}
\end{equation}
\end{proof}

\section{Is the $\braket{ZXZ}$ bound tight?}
We provide three items of evidence in support of our conjecture
\begin{equation}\begin{aligned}\mathrm{LE}^{1,n}(\rho_n^{WC}(\lambda))&=\max\{0,1-n(1-\braket{ZXZ})\} \\
&=\max\{0,2\lambda -1\}.\end{aligned}\end{equation} Firstly we show that all equatorial measurement sequences yield entanglement no higher than \eqref{eq: C}; secondly we show analytically that the claim holds for $\rho_4^{WC}$; and finally we present numerics for $\rho_7^{WC}$.

\subsection{Equatorial measurement sequences}

\begin{lemma}Given $\rho_n^{WC}(\lambda)$, no sequence of equatorial measurements can produce a two-qubit state $\rho_{1,n}$ with \mbox{$E(\rho_{1,n})>2\lambda - 1$}.\end{lemma}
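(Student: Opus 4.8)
The plan is to compute the conditional two‑qubit state explicitly for an arbitrary equatorial measurement sequence, exploiting the linearity of the post‑selection map together with the fact that $Z_i$ anticommutes with every equatorial projector on qubit $i$. Write $\rho_n^{WC}=\lambda\,\ket{C_n}\bra{C_n}+\tfrac{1-\lambda}{n}\sum_{i=1}^{n}Z_i\ket{C_n}\bra{C_n}Z_i$, and let $\Pi_s$ denote the product of the rank‑one equatorial projectors on qubits $2,\ldots,n-1$ for outcome string $s$. For a middle qubit $i$ one has $\Pi_s Z_i=Z_i\Pi_{s^{(i)}}$, where $s^{(i)}$ is $s$ with the $i$th bit flipped; since $Z_i$ then acts on a qubit that is traced out, $\mathrm{Tr}_{2\ldots n-1}[\Pi_s Z_i\ket{C_n}\bra{C_n}Z_i\Pi_s]=\mathrm{Tr}_{2\ldots n-1}[\Pi_{s^{(i)}}\ket{C_n}\bra{C_n}\Pi_{s^{(i)}}]$. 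For $i\in\{1,n\}$, $Z_i$ is a local unitary on a target qubit and commutes with all the projectors. Using that an equatorial sequence on the ideal cluster state produces a maximally entangled state with uniform outcome probabilities $2^{-(n-2)}$, these observations give, after normalisation,
\begin{equation*}
\rho_{1,n}(s)=\lambda\,\ket{\mu}\bra{\mu}+\frac{1-\lambda}{n}\Bigl(\sum_{i=2}^{n-1}\ket{\chi_i}\bra{\chi_i}+Z_1\ket{\mu}\bra{\mu}Z_1+Z_n\ket{\mu}\bra{\mu}Z_n\Bigr),
\end{equation*}
where $\ket{\mu}=\ket{\psi_0(s)}$ and $\ket{\chi_i}=\ket{\psi_0(s^{(i)})}$ are the maximally entangled states obtained from the ideal cluster state for outcome strings $s$ and $s^{(i)}$.

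Next I would recognise this as a Bell‑diagonal state. Since the reduced state of $\ket{\mu}$ on either qubit is maximally mixed, the four vectors $\{(R\otimes\mathbb{1})\ket{\mu}:R\in\{\mathbb{1},X,Y,Z\}\}$ form an orthonormal basis of maximally entangled states, and every Pauli image of $\ket{\mu}$ lies in it; as each $\ket{\chi_i}=B_i\ket{\mu}$ for a Pauli byproduct $B_i$, and $Z_1\ket{\mu}$, $Z_n\ket{\mu}$ are such images too, $\rho_{1,n}(s)$ is diagonal in this basis. The decisive point is that the weight on $\ket{\mu}$ equals exactly $\lambda$: none of $\ket{\chi_i}$, $Z_1\ket{\mu}$, $Z_n\ket{\mu}$ coincides with $\ket{\mu}$. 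For $i\in\{1,n\}$ this is clear because $\braket{\mu|Z_i|\mu}=\mathrm{Tr}[Z\,\mathbb{1}/2]=0$. For a middle qubit $i$, $\ket{\chi_i}$ differs from $\ket{\mu}$ by the byproduct produced when the single outcome $s_i$ is flipped; in a one‑dimensional cluster wire this byproduct propagates site by site towards one of the two ends and reaches the pair $\{1,n\}$ as a nontrivial weight‑one Pauli, whose expectation in $\ket{\mu}$ therefore also vanishes.

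It remains to apply the concurrence formula. A two‑qubit state that is diagonal in a maximally entangled basis with weights summing to one has concurrence $\max\{0,2w_{\max}-1\}$. In the regime where the claimed bound is non‑trivial, $2\lambda-1\ge 0$ (equivalently $\braket{ZXZ}\ge 1-1/n$), the weight $\lambda$ on $\ket{\mu}$ is at least $1/2$ and hence is the largest weight, so $C(\rho_{1,n}(s))=2\lambda-1$ for every outcome $s$. Convexity of the entanglement measure then yields $E(\rho_{1,n})\le 2\lambda-1$ both for each conditional state and for the outcome average, which is the assertion.

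I expect the main obstacle to be the byproduct step: one must show that flipping a single middle outcome acts on the output pair by a nontrivial weight‑one Pauli for \emph{all} choices of equatorial angles, not only for Pauli‑$X$ measurements. I would handle this either by invoking the standard fact that the outcome dependence of a cluster‑wire computation is carried entirely by Pauli byproduct operators whose propagation rules do not depend on the measurement angles, or by an explicit stabiliser reduction of $\ket{C_n}$ under $X$‑measurements --- which shows that the two surviving generators are weight‑two operators linking qubits $1$ and $n$, with signs that are affine‑linear in the outcomes, so that a single flip changes at least one sign and thus applies a weight‑one Pauli --- together with the observation that the pre‑measurement $Z$‑rotations converting $X$‑ into $X_\phi$‑measurements leave this byproduct structure unchanged.
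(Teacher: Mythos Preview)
Your setup agrees with the paper: the post-measurement state is a convex mixture of $|\mu\rangle$ (weight $\lambda$) and $n$ maximally entangled states $|\psi_i\rangle$, each orthogonal to $|\mu\rangle$. The gap is in the next step. You assert that $\rho_{1,n}(s)$ is Bell-diagonal because each $|\chi_j\rangle$ is a \emph{Pauli} image of $|\mu\rangle$, but this is false for generic equatorial angles. Writing $U_n=A\,B$ with the split at site $j$, the paper's formula gives $|\chi_j\rangle=(\mathbb{1}\otimes U_n^j)\ket{\phi^+}=(\mathbb{1}\otimes A X A^{-1})\ket{\mu}$, and one generalised Hadamard already shows the problem:
\[
H(\varphi)\,X\,H(\varphi)^{-1}=\cos\varphi\,Z-\sin\varphi\,Y,
\]
which is traceless but not a Pauli unless $\varphi\in\{0,\tfrac{\pi}{2},\pi,\tfrac{3\pi}{2}\}$. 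Your two proposed fixes do not rescue this. The ``standard byproduct'' fact from MBQC says that propagating $X^{s_j}$ through $H(\varphi_k)$ yields a Pauli \emph{and flips the sign of $\varphi_k$}; with fixed (non-adaptive) angles, $U_n^j$ therefore equals a Pauli times a \emph{different} product of $H(\pm\varphi_k)$'s, so $U_n^j U_n^{-1}$ is not Pauli. The pre-rotation trick has the same issue: after absorbing the $Z$-rotations you are measuring $X$ on a \emph{non-stabiliser} state, so the stabiliser byproduct calculus no longer determines the relation between the two outputs.

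The paper sidesteps this entirely. From the same decomposition it only uses that each $|\phi_{1,n}^{Z_i}\rangle$ is maximally entangled and \emph{orthogonal} to $|\mu\rangle$ (which follows from $\mathrm{Tr}[A X A^{-1}]=\mathrm{Tr}\,X=0$, no Pauli property needed). Then it writes
\[
\rho_{1,n}=\frac{1}{n}\sum_{i=1}^{n}\Bigl[\lambda\,|\mu\rangle\langle\mu|+(1-\lambda)\,|\phi_{1,n}^{Z_i}\rangle\langle\phi_{1,n}^{Z_i}|\Bigr],
\]
observes that each bracket is a mixture of two orthogonal maximally entangled states with concurrence $2\lambda-1$, and invokes convexity of concurrence. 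Your argument is easily repaired along exactly these lines: drop the Bell-diagonal claim, keep the orthogonality $\langle\mu|\chi_j\rangle=\langle\mu|Z_{1}|\mu\rangle=\langle\mu|Z_{n}|\mu\rangle=0$ (which you already have), and apply convexity to the $n$-term decomposition above instead of the $2w_{\max}-1$ formula.
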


\begin{proof}
We first derive a form for the two-qubit state $\ket{\phi_{1,n}}$ resulting from a sequence of equatorial measurements $\mathcal{M}_e$ on qubits $2$ to $n-1$ in the $n$-qubit cluster state $\ket{C_n}$. An equatorial measurement projects onto either of the states $\ket{\varphi,s=0,1}=\frac{1}{\sqrt{2}}(\ket{0}+(-1)^se^{i\varphi}\ket{1})$. Using the expression
\begin{equation}\begin{aligned}
(\ket{\varphi,s}\bra{\varphi,s}_A\otimes \mathbb{1}_B) \mathrm{S}_{AB}(\ket{\Psi}_A\otimes \ket{+}_B) \\
=\ket{\varphi,s}_A\otimes X^sH(\varphi)\ket{\Psi}_B,
\end{aligned}
\label{eq: genHad}
\end{equation} 
where $\mathrm{S}_{AB}$ is the controlled phase gate and
\begin{equation}H(\varphi)= \left( \begin{array}{cc}
1 & e^{i\varphi} \\
1 & -e^{i\varphi} \end{array} \right),\end{equation}
we find that 
\begin{equation}
\ket{\phi_{1,n}}=\mathcal{N}\braket{\vec{\varphi},\vec{s}\ |C_n}=(\mathbb{1}_1 \otimes U_n)\ket{\phi^+}_{1,n}.
\label{eq: eqCluster}
\end{equation}
Here $\ket{\phi^+}$ denotes a maximally entangled state, $\mathcal{N}$ the normalization factor,\begin{equation}\ket{\vec{\varphi},\vec{s}}=\ket{\varphi_2,s_2}_2\ket{\varphi_3,s_3}_3 \cdots \ket{\varphi_{n-1},s_{n-1}}_{n-1},\end{equation} and
\begin{equation}
U_n=X^{s_{n-1}}H(\varphi_{n-1}) \cdots X^{s_{3}}H(\varphi_{3})X^{s_{2}}H(\varphi_{2 }).
\label{eq: U}
\end{equation}
Without loss of generality we can take $s_i=0$ when writing $U_n$ in the following.

Furthermore we find that
\begin{equation}\begin{aligned}
\braket{\vec{\varphi},\vec{s}\ |Z_1|C_n}&\propto (\mathbb{1}_1 \otimes U_nX)\ket{\phi^+}_{1,n} \equiv \ket{\phi_{1,n}^{Z_1}}\\
\braket{\vec{\varphi},\vec{s}\ |Z_j|C_n}&\propto (\mathbb{1}_1 \otimes U_n^j)\ket{\phi^+}_{1,n}\equiv \ket{\phi_{1,n}^{Z_j}}\\
\braket{\vec{\varphi},\vec{s}\ |Z_n|C_n}&\propto (\mathbb{1}_1 \otimes (ZU)_n)\ket{\phi^+}_{1,n}\equiv \ket{\phi_{1,n}^{Z_n}},
\end{aligned}\end{equation}
where $U_n^j$ is given by $U_n$ from \eqref{eq: U} with $s_j=1$ and $s_i=0\ \forall \ i\neq j$. The state $\rho_{1,n}$ resulting from $\mathcal{M}_e$ on $\rho_n^{WC}$ is therefore given by
\begin{equation}
\begin{aligned}
\rho_{1,n}=&\;\mathcal{N}\bra{\vec{\varphi},\vec{s}\,}\rho_n^{WC} \ket{\vec{\varphi},\vec{s\,}}\\
=&\;\lambda \ket{\phi_{1,n}}\bra{\phi_{1,n}} + \frac{1-\lambda}{n}\ket{\phi_{1,n}^{Z_1}}\bra{\phi_{1,n}^{Z_1}} \\
 &+\frac{1-\lambda}{n}\sum_{j=2}^{n-1}\ket{\phi_{1,n}^{Z_j}}\bra{\phi_{1,n}^{Z_j}} +\frac{1-\lambda}{n}\ket{\phi_{1,n}^{Z_n}}\bra{\phi_{1,n}^{Z_n}}.
\end{aligned}
\nonumber
\end{equation}
We now observe that for $1\leq i \leq n$
\begin{equation}
C(\lambda \ket{\phi_{1,n}}\bra{\phi_{1,n}} + (1-\lambda)\ket{\phi_{1,n}^{Z_i}}\bra{\phi_{1,n}^{Z_i}})=2\lambda -1,
\end{equation}
as $\ket{\phi_{1,n}}\bra{\phi_{1,n}}$ and $\ket{\phi_{1,n}^{Z_i}}\bra{\phi_{1,n}^{Z_i}}$ are orthogonal, maximally entangled states.
Therefore, by convexity of concurrence,
\begin{equation}
E(\rho_{1,n})\leq 2\lambda -1.
\label{eq: EqBound}
\end{equation}
\end{proof}

\subsection{Analytics for $\rho_4^{WC}$}
We derive an exact expression for the concurrence of the two-qubit state resulting from two measurements in the $X$-$Z$ plane on the middle qubits in the four-qubit WC state. The result substantiates our conjecture that general measurement sequences on $\rho_n^{WC}$ cannot reach higher entanglement than equatorial ones.

The result of a projective measurement on qubit $i$ in the $X$-$Z$ plane may be written as $\ket{\theta_i}=\cos \frac{\theta_i}{2}\ket{0} + \sin \frac{\theta_i}{2}\ket{1}$. We perform measurements on qubits two and three of $\rho_4^W$, with outcomes parametrized by $\theta_2$ and $\theta_3$ respectively. This yields the two qubit-state
\begin{equation*}
\rho_{1,4}=\frac{1}{4}(\mathbb{1}\otimes \mathbb{1} + \vec{r}\cdot \vec{\sigma}\otimes \mathbb{1} + \mathbb{1} \otimes \vec{s}\cdot \vec{\sigma} + \sum_{i,j=1}^3 T_{ij}\sigma_i \otimes \sigma_j),
\end{equation*}
which has components
\begin{align*}
\vec{r}&=\frac{1+\lambda}{2}\left( \begin{array}{c}
\cos\theta_2 \\
0 \\
\sin\theta_2\cos\theta_3\end{array} \right),\;
\vec{s}=\frac{1+\lambda}{2}\left( \begin{array}{c}
\cos\theta_3 \\
0 \\
\sin\theta_3 \cos\theta_2 \end{array} \right),\\[1em]
T&=\left( \begin{array}{ccc}
\lambda \cos\theta_2 \cos\theta_3 & 0 & \lambda \sin\theta_3\\
0 & (2\lambda -1)\sin\theta_2\sin\theta_3 & 0 \\
\lambda \sin\theta_2 & 0 & 0 \end{array} \right).\end{align*}

The entanglement of a state is invariant under local unitary operations, $\rho_{1,4}\rightarrow(U_1\otimes U_4)\rho_{1,4}(U_1^\dagger\otimes U_4^\dagger)$. This corresponds to the transformations $\vec r\rightarrow O_1 \vec r$, $\vec s\rightarrow O_4 \vec s$, $T\rightarrow O_1 T O_4^\mathrm{T}$, where $O_1$ and $O_4$ are orthogonal matrices \cite{Horodecki1996}. By  choosing $O_1$ and $O_4$ that achieve a signed singular value decomposition of $T$, we perform local unitary operations on $\rho_{1,4}$ that correspond to the transformation
\setlength{\arraycolsep}{0pt}
\begin{equation}\rho_{1,4}\rightarrow
\frac{1}{4}\left( \begin{array}{cccc}
N_+(1+\lambda) & 0 & 0 & S(1-3\lambda) \\
0 & 1-\lambda & S(\lambda-1) & 0 \\
0 & S(\lambda-1) & 1-\lambda & 0 \\
 S(1-3\lambda)& 0 & 0 & N_-(1+\lambda) \end{array} \right),
\label{eq: X-form}
\nonumber
\end{equation}
where $S=\sin\theta_2\sin\theta_3$ and $N_\pm=1\pm\sqrt{1-S^2}$.

The above density matrix is manifestly in the form of an $X$-state \cite{Yu2007}. The concurrence of such a state is a simple function of the density matrix elements, which for us gives
\begin{equation}
C(\rho_{1,4})=\max\{0,\frac{1}{2}(3\lambda - 1)S+\frac{1}{2}(\lambda -1)\}.\label{eq: concSinSin}
\end{equation}
The entanglement that is localized by measurements in the $X$-$Z$ plane on $\rho_4^W$ is therefore clearly maximized by the equatorial measurement sequence, which gives $\sin\theta_2=\sin\theta_3=1$ and hence $S=1$.

\subsection{Numerics for $\rho_7^{WC}$}

\begin{figure}
	\includegraphics[scale=0.4]{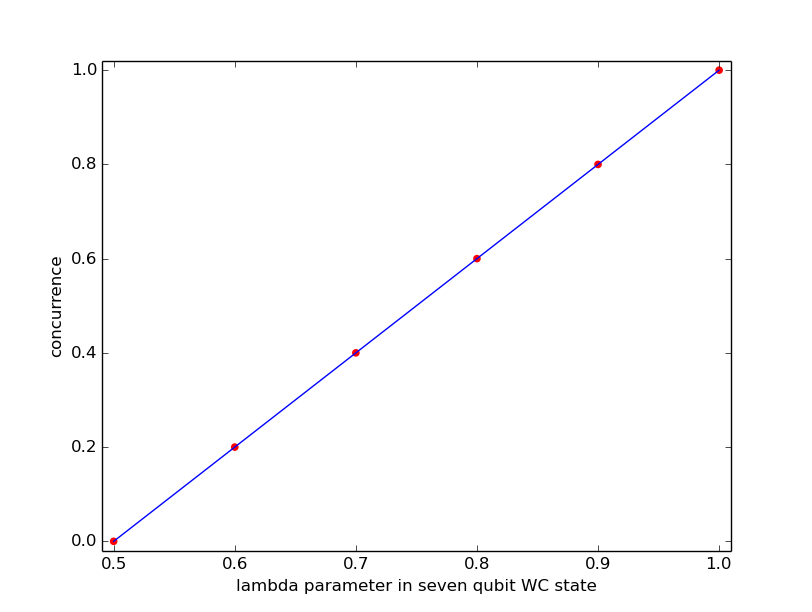}
	\caption{Numerical optimization of the concurrence of the two-qubit state resulting from five single-qubit measurements on $\rho_7^W$. The red dots show the maximum entanglement obtained while the blue line gives the value $2 \lambda -1$ of the $\braket{ZXZ}$ bound.}
	\label{fig: it's linear!}
\end{figure}

We have not found an analytic form for $\mathrm{LE}^{1,n}(\rho_n^{WC})$ when $n>4$. To further investigate whether our conjecture $\mathrm{LE}^{1,n}(\rho_n^{WC})=2\lambda - 1$ holds we therefore perform a numerical optimization.

Using a Nelder-Mead simplex algorithm the following optimization was carried out:
\begin{equation}
\max_{\vec{\theta},\vec{\varphi}}\ C(\mathcal{N} \braket{\vec{\theta}, \vec{\varphi}\ |\ \rho_7^W\ |\ \vec{\theta}, \vec{\varphi}}),
\end{equation}
where $\mathcal{N}$ gives the normalization factor and
\begin{equation}
\ket{\vec{\theta},\vec{\varphi}\,}=\otimes_{i=2}^6\left( \sin\frac{\theta_i}{2}\ket{0}_i+e^{i\varphi_i}\cos\frac{\theta_i}{2}\ket{1}_i\right).
\end{equation}

The results for six different values of $\lambda$ are shown in Fig. \ref{fig: it's linear!}. We find that the optimal measurement angles are $\theta_i=\frac{\pi}{2}$ and that $\varphi_i$ is arbitrary. This yields a value of concurrence $C(\rho_{1,7})=2\lambda -1$, again providing evidence for the conjecture $\mathrm{LE}^{1,n}(\rho_n^{WC})=2\lambda - 1$.

\end{document}